\newcommand{\PAPER}[1]{#1}
\newcommand{\SOCG}[1]{}
\algnewcommand{\Inputs}[1]{%
  \State \textbf{Inputs:}
  \Statex \hspace*{\algorithmicindent}\parbox[t]{.8\linewidth}{\raggedright #1}
}
\algnewcommand{\Initialize}[1]{%
  \State \textbf{Initialize:}
  \Statex \hspace*{\algorithmicindent}\parbox[t]{.8\linewidth}{\raggedright #1}
}
\algnewcommand{\TurnOne}[1]{%
  \State \textbf{Timestep 1:}
  \Statex \hspace*{\algorithmicindent}\parbox[t]{.8\linewidth}{\raggedright #1}
}
\newcommand{\eps}{\varepsilon}
\newcommand{\ignore}[1]{}
\newcommand{\OPT}{\textrm{OPT}}
\DeclareRobustCommand\onedot{\futurelet\@let@token\@onedot}
\def\@onedot{\ifx\@let@token.\else.\null\fi\xspace}
\title{Faster Approximation Algorithms for Geometric Set Cover}
\author{Timothy M. Chan}{Department of Computer Science, University of Illinois at Urbana-Champaign, USA}{tmc@illinois.edu}{https://orcid.org/0000-0002-8093-0675}{
Supported in part by NSF Grant CCF-1814026.}
\author{Qizheng He}{Department of Computer Science, University of Illinois at Urbana-Champaign, USA}{qizheng6@illinois.edu}{}{}
\authorrunning{T.\,M. Chan and Q. He}
\keywords{Set cover, approximation algorithms, multiplicate weight update method, random sampling, shallow cuttings}
\begin{document}
\maketitle

\sloppy
\begin{abstract}
We improve the running times of $O(1)$-approximation algorithms for the set cover problem in geometric settings, specifically, covering points by disks in the plane, or covering points by halfspaces in three dimensions. In the unweighted case, Agarwal and Pan [SoCG 2014] gave a randomized $O(n\log^4 n)$-time, $O(1)$-approximation algorithm, by using variants of the multiplicative weight update (MWU) method combined with geometric data structures. We simplify the data structure requirement in one of their methods and obtain
a \emph{deterministic} $O(n\log^3 n\log\log n)$-time algorithm.  With further new ideas, we obtain a still faster randomized $O(n\log n(\log\log n)^{O(1)})$-time algorithm.

For the weighted problem, we also give a randomized $O(n\log^4n\log\log n)$-time, $O(1)$-approximation algorithm, by simple modifications to the MWU method and the quasi-uniform sampling technique.


\end{abstract}

\section{Introduction}\label{sec:intro}

\subparagraph*{Unweighted geometric set cover.}
In this paper we study one of the most fundamental classes of geometric optimization problems: \emph{geometric set cover}.  Given a set $X$ of $O(n)$ points and a set $S$ of $O(n)$ geometric objects, find the smallest subset of objects from $S$ to cover all points in $X$.  In the dual set system, the problem corresponds to \emph{geometric hitting set} (finding the smallest number of points from $X$ that hit all objects in $S$).

This class of problems has been \emph{extensively} investigated in the computational geometry literature.  Since they are NP-hard in most scenarios, attention is turned towards approximation algorithms.  Different types of objects give rise to different results.
Typically, approximation algorithms fall into the following categories:

\begin{enumerate}
    \item Simple heuristics, e.g., greedy algorithms.
    \item Approaches based on solving the linear programming (LP) relaxation (i.e., fractional set cover) and rounding the LP solution.
    \item Polynomial-time approximation schemes (PTASs), e.g., via local search, shifted grids/quadtrees (sometimes with dynamic programming), or separator-based divide-and-conquer.
\end{enumerate}

Generally, greedy algorithms achieve only logarithmic approximation factors (there are some easy cases where they give $O(1)$ approximation factors, e.g., hitting set for fat objects such as disks/balls in the ``continuous'' setting with $X=\mathbb{R}^d$~\cite{EfratKNS00}).  The LP-based approaches give better approximation factors in many cases, e.g., $O(1)$ approximation for set cover and hitting set for disks in 2D and halfspaces in 3D, set cover for objects in 2D with linear ``union complexity'', and hitting set for pseudodisks in 2D \cite{bronnimann1995almost,clarkson2007improved,AronovES10,Varadarajan09,PyrgaR08}.
Subsequently, local-search PTASs have been found by Mustafa and Ray~\cite{mustafa2009ptas} in some cases, including set cover and hitting set for disks in 2D and halfspaces in 3D (earlier, PTASs were known for hitting set only in the continuous setting for unit disks/balls~\cite{HochbaumM85}, and for arbitrary disks/balls and fat objects~\cite{Chan03}).

Historically, the focus has been on obtaining good approximation factors.  Here, we are interested in obtaining approximation algorithms with good---ideally, near linear---running time.  Concerning the efficiency of known approximation algorithms:

\begin{enumerate}
    \item
Certain simple heuristics can lead to fast $O(1)$-approximation algorithms in some easy cases (e.g., continuous hitting set for unit disks or unit balls by using grids), but generally, even simple greedy algorithms may be difficult to implement in near linear time (as they may require nontrivial dynamic geometric data structures).
\smallskip
\item
LP-based approaches initially may not seem to be the most efficient, because of the need to solve an LP\@.  However, a general-purpose LP solver can be avoided.  The set-cover LP can alternatively be solved (approximately) by the \emph{multiplicative weight update} (MWU) method. In the computational geometry literature, the technique has been called \emph{iterative reweighting}, and its use in geometric set cover was explored by Br\"onnimann and Goodrich~\cite{bronnimann1995almost} (one specific application appeared in an earlier work by Clarkson~\cite{clarkson1993algorithms}), although the technique was known even earlier outside of geometry.  On the other hand, the LP-rounding part corresponds to the well-known geometric problem of constructing \emph{$\eps$-nets}, for which efficient algorithms are known~\cite{chan2016optimal,Mustafa19}.
\smallskip
\item
PTAS approaches generally have large polynomial running time, even when specialized to specific approximation factors.  For example, see~\cite{BusGMR17} for efforts in improving the degree of the polynomial.
\end{enumerate}

In this paper we design faster approximation algorithms for geometric set cover via the LP/MWU-based approaches.  There has been a series of work on speeding up MWU methods for covering or packing LPs (e.g., see \cite{chekuri2018randomized,KoufogiannakisY14,Young01}).
In geometric settings, we would like more efficient algorithms (as generating the entire LP explicitly would already require quadratic time), by somehow exploiting geometric data structures.  The main previous work was by Agarwal and Pan \cite{agarwal2014near} from SoCG 2014, who showed how to compute an $O(1)$-approximation for set cover for 2D disks or 3D halfspaces, in $O(n\log^4n)$ randomized time.

Agarwal and Pan actually proposed two MWU-based algorithms:
The first is a simple variant of the standard MWU algorithm of Br\"onnimann and Goodrich, which proceeds in logarithmically many rounds.
The second  views the problem as a 2-player zero-sum game, works quite differently (with weight updates to both points and objects), and uses randomization; the analysis is more complicated.
Because the first algorithm requires stronger data structures---notably, for approximate weighted range counting with dynamic changes to the weights---Agarwal and Pan chose to implement their second algorithm instead, to get their $O(n\log^4n)$ result for 3D halfspaces.

\subparagraph*{New results.}
In this paper we give:

\begin{itemize}
    \item a \emph{deterministic} near-linear $O(1)$-approximation algorithm for set cover for 3D halfspaces.  Its running time is $O(n\log^3 n\log\log n)$, which besides eliminating randomization is also a little faster than Agarwal and Pan's;
    \medskip
    \item a still faster randomized near-linear $O(1)$-approximation algorithm for set cover for 3D halfspaces.  Its running time is $O(n\log n\log^{O(1)}\log n)$, which is essentially optimal\footnote{Just deciding whether a solution exists requires $\Omega(n\log n)$ time in the algebraic decision-tree model, even for 1D intervals.}
    ignoring minor $\log\log n$ factors.
\end{itemize}

Although generally shaving logarithmic factors may not be the most important endeavor, the problem is fundamental enough that we feel it worthwhile to find the most efficient algorithm possible.

Our approach interestingly is to go back to Agarwal and Pan's first MWU algorithm.  We show that with one simple modification, the data structure requirement can actually be relaxed: namely, for the approximate counting structure, there is no need for weights, and the only update operation is insertion.  By standard techniques, insertion-only data structures reduce to static data structures.
This simple idea immediately yields our deterministic result.
(Before, Bus et al.~\cite{bus2018practical} also aimed to find variants of Agarwal and Pan's first algorithm with simpler data structures, but they did not achieve improved theoretical time bounds.)  Our best randomized result requires a more sophisticated combination of several additional ideas.  In particular, we incorporate random sampling in the MWU algorithm, and extensively use \emph{shallow cuttings}, in both primal and dual space.

We have stated our results for set cover for 3D halfspaces.  This case is arguably the most central. It is equivalent to hitting set for 3D halfspaces, by duality, and also includes set cover and hitting set for 2D disks as special cases, by the standard lifting transformation.  The case of 3D dominance ranges is another special case, by a known transformation \cite{chan2011orthogonal,PachT11} (although for the dominance case, word-RAM techniques can speed up the algorithms further).  The ideas here are likely useful also in the other cases considered in Agarwal and Pan's paper (e.g., hitting set for rectangles, set cover for fat triangles, etc.), but in the interest of keeping the paper focused, we will not discuss these implications.

\subparagraph*{Weighted geometric set cover.}
Finally, we consider the weighted version of set cover: assuming that each object is given a weight, we now want a subset of the objects of $R$ with the minimum total weight that covers all points in $X$.
The weighted problem has also received considerable attention:
Varadarajan~\cite{varadarajan2010weighted} and
Chan et al.~\cite{chan2012weighted} used the LP-based approach to obtain $O(1)$-approximation algorithms for weighted set cover for 3D halfspaces (or for objects in 2D with linear union complexity); the difficult part is in constructing $\eps$-nets with small weights, which they solved by the
\emph{quasi-random sampling} technique.
Later, Mustafa, Raman, and Ray~\cite{MustafaRR15} discovered
a quasi-PTAS for 3D halfspaces by using geometric separators; the running time is very high $(n^{\log^{O(1)}n})$.

Very recently, Chekuri, Har-Peled, and Quanrud~\cite{fasterlp} described new randomized MWU methods which can efficiently solve the LP corresponding to various generalizations of geometric set cover, by using appropriate geometric data structures.
In particular, for weighted set cover for 3D halfspaces, they obtained a randomized $O(n\log^{O(1)}n)$-time algorithm to solve the LP but with an unspecified number of logarithmic factors.
They did not address the LP-rounding part, i.e., construction of an $\eps$-net of small weight---a direct implementation of the quasi-uniform sampling technique would not lead to a near-linear time bound.

We observe that a simple direct modification of the standard MWU algorithm of Br\"onnimann and Goodrich, or Agarwal and Pan's first algorithm, can also solve the LP for weighted geometric set cover, with arguably simpler data structures than Chekuri et al.'s.  Secondly, we observe that an $\eps$-net of small weight can be constructed in near-linear time, by using quasi-uniform sampling more carefully.
This leads to a randomized $O(n\log^4n\log\log n)$-time, $O(1)$-approximation algorithm for weighted set cover for 3D halfspaces (and thus for 2D disks).






\section{Preliminaries}

Let $X$ be a set of points and $S$ be a set of objects.
For a point $p$, its \emph{depth} in $S$ refers to the number of objects in $S$ containing $p$.  A point $p$ is said to be \emph{$\eps$-light} in $S$, if it has depth $\leq\eps |S|$ in $S$; otherwise it is \emph{$\eps$-heavy}.
A subset of objects $T\subseteq S$ is an \emph{$\eps$-net} of $S$ if $T$ covers all points that are $\eps$-heavy in $S$.

It is known that there exists an $\eps$-net with size $O(\frac{1}{\eps})$ for any set of halfspaces in 3D or disks in 2D~\cite{matouvsek1990net} (or more generally for objects in the plane with linear union complexity~\cite{clarkson2007improved}).


\subsection{The Basic MWU Algorithm}
We first review the standard multiplicative weight\footnote{
In our algorithm description, we prefer to use the term ``multiplicity'' instead of ``weight'', to avoid confusion with the weighted set cover problem later.
} update (MWU) algorithm for geometric set cover, as described by Br\"onnimann and Goodrich~\cite{bronnimann1995almost} (which generalizes an earlier algorithm by Clarkson~\cite{clarkson1993algorithms}, and is also well known outside of computational geometry).

Let $X$ be the set of input points and $S$ be the set of input objects, with $n=|X|+|S|$. Let $\OPT$ denote the size of the minimum set cover.
We assume that a value $t=\Theta(\OPT)$ is known; this assumption will be removed later by a binary search for $t$.
In the following pseudocode, we work with a multiset $\hat{S}$; in measuring size or counting depth, we include multiplicities (e.g., $|\hat{S}|$ is the sum of the multiplicities of all its elements).


\begin{algorithm}[H]
\begin{algorithmic}[1]
\State Guess a value $t\in [\OPT,2\,\OPT]$ and set $\eps=\frac{1}{2t}$.
\State Define a multiset $\hat{S}$ where each object $i$ in $S$ initially has multiplicity $m_i=1$.
\While {we can find a point $p\in X$ which is $\eps$-light in $\hat{S}$}
    \For {each object $i$ containing $p$}
    \Comment{call lines 4--5 a \emph{multiplicity-doubling step}}
        \State Double its multiplicity $m_i$.
    \EndFor
\EndWhile
\State Return an $\eps$-net of the multiset $\hat{S}$.
\end{algorithmic}
\end{algorithm}
Since at the end all points in $X$ are $\eps$-heavy in $\hat{S}$, the returned subset is a valid set cover of $X$.  For halfspaces in 3D or disks in 2D, its size is $O(\frac{1}{\eps})=O(t)=O(\OPT)$.

A standard analysis shows that the algorithm always terminates after
$O(t\log\frac{n}{t})$ multiplicity-doubling steps.  We include a quick proof:
Each multiplicity-doubling step increases $|\hat{S}|$ by a factor of at most $1+\eps$, due to the $\eps$-lightness of~$p$.  Thus, after $z$ doubling steps,
$|\hat{S}|\le n(1+\eps)^z \le n e^{\eps z} = ne^{z/(2t)}$.
On the other hand, consider a set cover $T^*$ of size $t$. In each multiplicity-doubling step,
at least one of the objects in $T^*$ has its multiplicity doubled.
So, after $z$ multiplicity-doubling steps, the total multiplicity in $T^*$
is at least $t2^{z/t}$.
We conclude that $t2^{z/t}\le |\hat{S}|\le ne^{z/(2t)}$, implying that $z=O(t\log\frac{n}{t})$.

\subsection{Agarwal and Pan's (First) MWU Algorithm}

Next, we review Agarwal and Pan's first variant of the
MWU algorithm~\cite{agarwal2014near}.
One issue in implementing the original algorithm lies in the test in line~3: searching for one light point by scanning all points in $X$ from scratch every time seems inefficient.  In Agarwal and Pan's refined approach, we proceed in a small number of rounds, where in each round, we examine the points in $X$ in a fixed order and test for lightness in that order.

\begin{algorithm}[H]
\begin{algorithmic}[1]
\State Guess a value $t\in [\OPT,2\,\OPT]$ and set $\eps=\frac{1}{2t}$.
\State Define a multiset $\hat{S}$ where each object $i$ in $S$ initially has multiplicity $m_i=1$.
\Loop \Comment{call this the start of a new \emph{round}}
    \For {each point $p\in X$ in any fixed order}
        \While {$p$ is $\eps$-light in $\hat{S}$}
                \For {each object $i$ containing $p$} \Comment{call lines 6--7 a \emph{multiplicity-doubling step}}
                    \State Double its multiplicity $m_i$.
                \EndFor
                \If {the number of multiplicity-doubling steps in this round exceeds $t$}
                    \State Go to line~3 and start a new round.
                \EndIf
        \EndWhile
    \EndFor
    \State Terminate and return an $\frac\eps2$-net of the multiset $\hat{S}$.
\EndLoop
\end{algorithmic}
\end{algorithm}

To justify correctness, observe that since each round performs at most $t$ multiplicity-doubling steps, $|\hat{S}|$ increases by a factor of at most
$(1+\eps)^t\le e^{\eps t}\le e^{1/2} < 2$.  Thus, a point $p$ that is checked to be $\eps$-heavy in $\hat{S}$ at any moment during the round will remain
$\frac\eps2$-heavy in $\hat{S}$ at the end of the round.

Since all but the last round performs $t$ multiplicity-doubling steps and we have already shown that the total number of such steps is $O(t\log\frac{n}{t})$, the number of rounds is
$O(\log\frac{n}{t})$.


\section{``New'' MWU Algorithm}

Agarwal and Pan's algorithm still requires an efficient data structure to test whether a given point is light, and the data structure needs to support dynamic changes to the multiplicities.
We propose a new variant that requires simpler data structures.

Our new algorithm is almost identical to Agarwal and Pan's, but with just one very simple change!  Namely, after line~3, at the beginning of each round, we add the following line, to readjust all multiplicities:

\noindent\hrulefill

\noindent\ \ {\small 3.5:}\ \ \ \  for each object $i$,
reset its multiplicity $m_i\leftarrow \lceil m_i \frac{10n}{|\hat{S}|}\rceil$.

\noindent\hrulefill

\smallskip
To analyze the new algorithm, consider modifying the multiplicity $m_i$ instead to $\lceil m_i \frac{10n}{|\hat{S}|}\rceil\cdot \frac{|\hat{S}|}{10n}$.
The algorithm behaves identically (since the multiplicities are identical except for a common rescaling factor), but is more convenient to analyze.  In this version, multiplicities are nondecreasing over time (though they may be non-integers).  After the modified line~3.5, the new $|\hat{S}|$ is at most $\sum_i \left(m_i \frac{10n}{|\hat{S}|} + 1\right)\cdot \frac{|\hat{S}|}{10n} \le 11n\cdot \frac{|\hat{S}|}{10n} = 1.1|\hat{S}|$.
If the algorithm makes $z$ multiplicity-doubling steps, then it performs line~3.5 at most $z/t$ times and we now have $|\hat{S}|\le n (1+\eps)^z\cdot 1.1^{z/t}\le ne^{z/(2t)}\cdot 1.1^{z/t}$.  This is still sufficient to imply that $z=O(t\log\frac nt)$, and so the number of rounds remains
$O(\log\frac nt)$.

Now, let's go back to line~3.5 as written.  The advantage of this multiplicity readjustment step is that it decreases $|\hat{S}|$ to $\sum_i \left(m_i \frac{10n}{|\hat{S}|} + 1\right) = O(n)$.
At the end of the round, $|\hat{S}|$ increases by a factor of at most $(1+\eps)^t < 2$ and so remains $O(n)$.
Thus, in line~7, instead of doubling the multiplicity of an object, we can just repeatedly increment the multiplicity (i.e., insert one copy of an object) to reach the desired value.  The total number of increments per round is $O(n)$.

\newcommand{\Tinsert}{T_{\textrm{insert}}}
\newcommand{\Tquery}{T_{\textrm{count}}}
\newcommand{\Treport}{T_{\textrm{report}}}
\newcommand{\Tprep}{T_{\textrm{prep}}}
\newcommand{\Tcount}{T_{\textrm{count}}}
\newcommand{\Tnet}{T_{\textrm{net}}}
\newcommand{\REPORT}{{\sc Report}}
\newcommand{\COUNT}{{\sc Approx-Count-Decision}}

Note that in testing for $\eps$-lightness in line~5, a constant-factor approximation of the depth is sufficient, with appropriate adjustments of constants in the algorithm.  Also, although the algorithm as described may test the same point $p$ for lightness several times in a round, this can be easily avoided: we just keep track of the increase $D$ in the depth of the current point~$p$; the new depth of $p$ can be 2-approximated by the maximum of the old depth and $D$.

To summarize, an efficient implementation of each round of the new algorithm requires solving the following geometric data structure problems (\REPORT\ for line~6, and \COUNT\ for line~5):

\begin{description}
    \item[Problem \REPORT:] Design a data structure to store a static set $S$ of size $O(n)$ so that given a query point $p\in X$, we can report all objects in $S$ containing the query point $p$.  Here, the output size of a query is guaranteed to be at most $O(k)$ where $k := \frac nt$ (since $\eps$-lightness of~$p$ implies that its depth is at most $\eps |\hat{S}|=\Theta(\frac nt)$ even including multiplicities).
    \medskip
    \item[Problem \COUNT:] Design a data structure to store a multiset $\hat{S}$ of size $O(n)$ so that given a query point $p\in X$, we can either declare that the number of objects in $\hat{S}$ containing $p$ is less than a fixed threshold value $k$, or that the number is more than $\frac kc$, for some constant $c>1$.  Here, the threshold again is $k := \frac nt$ (since $\eps|\hat{S}|=\Theta(\frac nt)$).
    The data structure should support the following type of updates: insert one copy of an object to $\hat{S}$.  (Deletions are not required.)
    Each point in $X$ is queried once.
\end{description}

\noindent
To bound the cost of the algorithm:

\begin{itemize}
\item Let $\Treport$ denote the total time for $O(t)$ queries in Problem \REPORT.
\item Let $\Tcount$ denote the total time for $O(n)$ queries and $O(n)$ insertions in Problem \COUNT.  (Note that the initialization of $\hat{S}$ at the beginning of the round can be done by $O(n)$ insertions.)
\item Let $\Tnet$ denote the time for computing an $\eps$-net of size $O(\frac1\eps)$ for a given multiset $\hat{S}$ of size $O(n)$.
\end{itemize}

\noindent The total running time over all $O(\log\frac nt)$ rounds is
\begin{equation}\label{eqn:runtime}
O((\Treport+\Tcount)\log\tfrac nt \,+\, \Tnet).
\end{equation}


\ignore{
\begin{algorithm}[H]
\begin{algorithmic}[1]
\State Guess $k=\Theta(OPT)$.
\State Set $m_i=1$ for each object $i$. All objects (with multiplicity) form a multiset $\hat{S}$. Let $\hat{n}=|\hat{S}|=\sum_{i=1}^n m_i$ and $\eps=\Theta(\frac{1}{k})$.
\Repeat { //call this a round}
    \State Set scaled multiplicity $m_i'=\lceil m_i\frac{n}{\hat{n}}\rceil$ for each object $i$, and form a corresponding multiset of objects $\hat{S}'$. Let $\hat{n}'=\sum_{i=1}^n m_i'$.
    \State Build a dynamic data structure on $\hat{S}'$ that supports the queries we need.
    \For {$i=1,\dots,|X|$ }
        \If {point $p_i$ is $\eps$-light in $\hat{S}'$ (by querying the data structure)}
            \Repeat {}
                \For {each object $j$ covering $p_i$}
                    \State Insert $m_j'$ copies of object $j$ into the data structure.
                    \State Double its multiplicity $m_j$, and also double $m_j'$.
                \EndFor
            \Until {$p_i$ becomes $\eps$-heavy, or we have performed $K=\Theta(k)$ multiplicity-doubling steps in this round.}
        \EndIf
    \EndFor
\Until {after $<K$ multiplicity-doubling steps, all points have been processed.}
\State Return an $\eps$-net of the multiset $\hat{S}$.
\end{algorithmic}
\end{algorithm}

Here we use a data structure to find an $\eps$-light point $p$, and the data structure is maintained when we double the multiplicity $m_j$ of the $j$-th object.

After replacing $m_i$ with $m_i'$, the algorithm will still terminate after performing $O(n\log\frac{n}{k})$ multiplicity-doubling steps:
\begin{proof}
scaling by a uniform factor.\\
only need to consider the ceiling function.\\
for the lower bound, $m(H)=\sum_{h \in H} 2^{z_{h}} \geq k 2^{z / k}$ still holds, because we are rounding up.\\
for the upper bound, \\

\end{proof}

And in that paper we already know:\\
1. At the end of the algorithm, the total multiplicity is bounded by $\frac{n^4}{k^3}=poly(n)$. The multiplicities are initialized to be $1$. This implies each object will have its multiplicity doubled $O(\log n)$ times in total.\\
2. Each round performs $O(k)$ multiplicity-doubling steps, and there are at most $O(\log\frac{n}{k})$ rounds. $\hat{n}$ will only increase by a constant factor in each round, which means within a round, a heavy point will not become light again.\\
3. After the following preprocessing step, each point is covered by $O(\eps n)$ disks: compute an $\eps$-net $R_0$ of $R$, set $R=R\backslash R_0$ and $X=\{x\mid x\text{ is not covered by }R_0\}$. Directly add $R_0$ (of size $O(k)$) to the solution. This takes $O(n\log n)$ time by Lemma \ref{lemma:net}.

}

\section{Implementations}

In this section,
we describe specific implementations of our MWU algorithm when the objects are halfspaces in 3D (which include disks in 2D as a special case by the standard lifting transformation).  We first consider deterministic algorithms.

\subsection{Deterministic Version}\label{sec:det}

\subparagraph*{Shallow cuttings.}
We begin by reviewing an important tool that we will use several times later.
For a set of $n$ planes in $\mathbb{R}^3$, a \emph{$k$-shallow $\eps$-cutting}
is a collection of interior-disjoint polyhedral cells, such that each cell intersects at most $\eps n$ planes, and the union of the cells cover all points of level at most $k$ (the \emph{level} of a point refers to the number of planes below it).  The list of all planes intersecting a cell $\Delta$ is called the \emph{conflict list} of $\Delta$.
Matou\v{s}ek \cite{matousek1992reporting} proved the existence of a $k$-shallow $(\frac{ck}{n})$-cutting with $O(\frac nk)$ cells for any
constant~$c$.  Chan and Tsakalidis \cite{chan2016optimal} gave an $O(n\log \frac nk)$-time deterministic algorithm to construct such a cutting, along with all its conflict lists (an earlier randomized algorithm was given by Ramos~\cite{ramos1999range}).
If $c$ is sufficiently large, the cells may be made ``downward'', i.e., they all contain $(0,0,-\infty)$.

\subparagraph*{Constructing $\eps$-nets.}
The best known deterministic algorithm for constructing $\eps$-nets for 3D halfspaces is by Chan and Tsakalidis~\cite{chan2016optimal} and runs in $\Tnet = O(n\log\frac1\eps)=O(n\log n)$ time.

The result follows directly from their shallow cutting algorithm (using a simple argument of Matou\v sek~\cite{matousek1992reporting}):  Without loss of generality, assume that all halfspaces are upper halfspaces, so depth corresponds to level with respect to the bounding planes (we can compute a net for lower halfspaces separately and take the union, with readjustment of $\eps$ by a factor of~2).
We construct an $(\eps n)$-shallow $\frac\eps2$-cutting with $O(\frac1\eps)$ cells, and for each cell, add a plane completely below
the cell (if it exists) to the net.
To see correctness, for a point $p$ with level $\eps n$, consider the cell $\Delta$ containing $p$; at least $\eps n - \frac{\eps n}{2} >0$ planes are completely below $\Delta$, and so the net contains at least one plane below $p$.

\subparagraph*{Solving Problem \REPORT.}
This problem corresponds to 3D \emph{halfspace range reporting} in dual space,
and by known data structures~\cite{Chan00,
afshani2009optimal,chan2016optimal}, the total time to answer $O(t)$ queries is $\Treport =O(t\cdot (\log n + k)) = O(t\log n + n)$, assuming an initial preprocessing of $O(n\log n)$ time (which is done only once).

This result also follows directly from shallow cuttings (since space is not our concern, the solution is much simplified):
Without loss of generality, assume that all halfspaces are upper halfspaces.  We construct a $k$-shallow $O(\frac kn)$-cutting with $O(\frac nk)$ downward cells.  Given a query point $p\in X$, we find the cell containing $p$, which can be done in $O(\log n)$ time by planar point location; we then do a linear search over its conflict list, which has size $O(k)$.

Note that the point location operations can be actually be done during preprocessing in $O(n\log n)$ time since $X$ is known in advance.  This lowers the time bound for $O(t)$ queries to $\Treport=O(tk)=O(n)$.

\subparagraph*{Solving Problem \COUNT.}
This problem corresponds to the decision version of 3D \emph{halfspace approximate range counting} in dual space, and several deterministic and randomized data structures have already been given in the static case~\cite{AfshaniC09,afshani2010general}, achieving $O(\log n)$ query time and $O(n\log n)$ preprocessing time.

This result also follows directly from shallow cuttings:
Without loss of generality, assume that all halfspaces are upper halfspaces.
We construct a $\frac{n}{b^i}$-shallow $O(\frac{1}{b^i})$-cutting with $O(b^i)$ downward cells for every $i=1,\ldots,\log_b n$ for some constant~$b$.  Chan and Tsakalidis's algorithm can actually construct all $O(\log n)$ such cuttings in $O(n\log n)$ total time.
With these cuttings, we can compute an $O(1)$-approximation to the depth/level of a query point $p$ by simply finding the largest $i$ such that $p$ is contained in a cell of the $\frac{n}{b^i}$-shallow cutting (the level of $p$ would then be $O(\frac{n}{b^i})$ and at least $\frac{n}{b^{i+1}}$).  In Chan and Tsakalidis's construction,
each cell in one cutting intersects $O(1)$ cells in the next cutting,
and so we can locate the cells containing $p$ in $O(1)$ time per $i$, for a total of $O(\log n)$ time.

To solve Problem~\COUNT, we still need to support insertion.
Although the approximate decision problem is not decomposable, the above solution solves the approximate counting problem, which is decomposable,
so we can apply the standard \emph{logarithmic method}~\cite{bentley1980decomposable} to transform the static data structure into a semi-dynamic, insertion-only data structure.
 The transformation causes a logarithmic factor increase, yielding in our case $O(\log^2n)$ query time and $O(\log^2n)$ insertion time.  Thus, the total time for $O(n)$ queries and insertions is $\Tcount=O(n\log^2 n)$.

\subparagraph*{Conclusion.}
By~(\ref{eqn:runtime}), the complete algorithm has running time
$O((\Treport + \Tcount)\log\frac nt + \Tnet) = O((n + n\log^2 n)\log\frac nt + n\log n) = O(n\log^3 n)$.

One final issue remains: we have assumed that a value $t\in [\OPT,2\,\OPT]$ is given.  In general, either the algorithm produces a solution of size $O(t)$, or (if it fails to complete within $O(\log\frac nt)$ rounds) the algorithm may conclude that $\OPT>t$.
We can thus find an $O(1)$-approximation to $\OPT$ by a binary search over $t$ among the $O(\log n)$ possible powers of~2, with $O(\log\log n)$ calls to the algorithm.  The final time bound is $O(n\log^3 n\log\log n)$.

\begin{theorem}
Given $O(n)$ points and $O(n)$ halfspaces in $\mathbb{R}^3$, we can find a subset of halfspaces covering all points, of size within $O(1)$ factor of the minimum, in deterministic
$O(n\log^3n\log\log n)$ time.
\end{theorem}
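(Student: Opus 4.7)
The plan is to instantiate the new MWU framework from the previous section with concrete implementations of Problems \REPORT\ and \COUNT\ for 3D halfspaces, combined with a near-linear $\eps$-net construction, and then close with a binary search over the guess $t$. All three ingredients will be built from (downward) shallow cuttings, and the running time bound~(\ref{eqn:runtime}) will then be a direct calculation.

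For Problem \REPORT, I would construct a single $k$-shallow $O(k/n)$-cutting of the bounding planes (with $k=n/t$), giving $O(n/k)=O(t)$ cells each with conflict list of size $O(k)$. Because the query set $X$ is static and known once and for all, I would perform all point-location queries for points of $X$ during preprocessing in $O(n\log n)$ time, so a query inside a round reduces to a linear scan of the relevant conflict list, yielding $\Treport = O(tk)=O(n)$ per round. For the $\eps$-net step I would use Chan and Tsakalidis's cutting-based construction to get $\Tnet = O(n\log(1/\eps))=O(n\log n)$, applied only once at the end of a successful execution.

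The main obstacle is Problem \COUNT, because we need to support insertions of object multiplicities while retaining a fast approximate decision. In the static case I would use a geometric sequence of $(n/b^i)$-shallow $O(1/b^i)$-cuttings for $i=1,\dots,\log_b n$, all built together in $O(n\log n)$ time; locating $p$ in the deepest cell still containing it yields a constant-factor depth approximation in $O(\log n)$ time by walking through the $O(\log n)$ levels (each cell meets $O(1)$ cells at the next level). The approximate decision problem itself is not decomposable, but the underlying approximate counting problem is, so I would apply the standard Bentley--Saxe logarithmic method to obtain a semi-dynamic, insertion-only structure at the cost of a single extra logarithmic factor, giving $O(\log^2 n)$ per query and per insertion, hence $\Tcount = O(n\log^2 n)$ per round.

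Plugging into~(\ref{eqn:runtime}) gives total time $O((n + n\log^2 n)\log\tfrac{n}{t} + n\log n) = O(n\log^3 n)$ per guess of $t$. Since $\OPT\le n$, a binary search over the $O(\log n)$ powers of $2$ for a valid $t\in[\OPT,2\,\OPT]$ needs only $O(\log\log n)$ calls, where each call either returns a set cover of size $O(t)$ or fails (because the round bound $O(\log\tfrac{n}{t})$ is exceeded, certifying $\OPT>t$). The overall running time is therefore $O(n\log^3 n\log\log n)$ deterministic, yielding the theorem.
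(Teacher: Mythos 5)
Your proposal follows the paper's proof essentially verbatim: the same shallow-cutting implementations of Problems \REPORT\ and \COUNT\ (with the same observation that point locations for $X$ can be precomputed, and that the counting --- though not the decision --- version is decomposable and hence admits the Bentley--Saxe logarithmic method), the same $\eps$-net construction via Chan--Tsakalidis, and the same $O(\log\log n)$-step binary search over $t$. No meaningful deviations.
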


\ignore{
In this section we present an $O(n\log^3 n\log\log n)$ deterministic algorithm for set cover.

During the multiplicity-doubling step, we need (the dual version of) halfspace range-reporting to find all objects covering point $p$.

\begin{lemma}
Halfspace range-reporting query can be answered in $O(s)$ time (where the number of objects in the range is $s$), after $O(n\log n)$ preprocessing, if the ranges are already known \cite{afshani2009optimal}.
\end{lemma}
\begin{proof}
In the dual, first compute shallow cutting for $r=2^i$ ($i=1,\dots,\log n$) in $O(n\log n)$ time \cite{chan2016optimal}, so that each cell intersects $O(\frac{n}{r})$ halfspaces. For each query point, use point location (which is build during the preprocessing step, using the multi-level structure for shallow cutting in \cite{chan2016optimal}) to find the cell containing it, then check all halfspaces intersecting that cell. When $r=O(\frac{n}{s})$, this takes $O(s)$ time.\\
\end{proof}

Our MWU algorithm needs to solve the following data structure problem as a subroutine:
\begin{problem}[P1]
Design a dynamic data structure that supports the following operations:
\begin{itemize}
\item Insert an (unweighted) halfspace.
\item Given a query point $p$, perform $O(1)$-approximate range counting on halfspaces covering $p$.
\end{itemize}
\end{problem}
\begin{lemma}
There exists a deterministic algorithm for (P1) that requires $O(\log^2 n)$ insertion time and $O(\log^2 n)$ query time.
\end{lemma}
\begin{proof}
We need the following lemma for the static case:
\begin{lemma}
For unweighted $2$-approximate range counting, there exists a deterministic algorithm with $O(n\log n)$ preprocessing time and $O(\log n)$ query time \cite{afshani2010general}.
\end{lemma}
When we only have insertions, we can make the $2$-approximate range counting structure partially dynamic by the logarithmic method \cite{bentley1980decomposable}. The idea is to maintain $O(\log n)$ static range counting structures, the $i$-th structure contains $2^i$ objects. If at some time we have two static structures of size $2^i$, merge them into a new structure of size $2^{i+1}$ by rebuild. Each object will be inserted $O(\log n)$ times by amortized analysis. When performing a query, we need to query all $O(\log n)$ structures. We have $O(\log^2 n)$ insertion time and $O(\log^2 n)$ query time.
\end{proof}

To decide whether a point is $\eps$-light, we use $2$-approximate range counting.

(P1) only works when objects are without multiplicities. If there are multiplicities, we need the following trick: let $m_i'\leftarrow\lceil m_i\frac{n}{\hat{n}}\rceil$. The ceiling function $\lceil\cdot\rceil$ will introduce an additive error of at most $1$. The total number of objects in the approximate range counting structure is $\sum_{i=1}^n m_i'=\sum_{i=1}^n \lceil m_i\frac{n}{\hat{n}}\rceil\leq n+\sum_{i=1}^n m_i\frac{n}{\hat{n}}=O(n)$. After preprocessing, each point is covered by $O(\eps n)$ disks, so when counting objects covering point $p$, we have additive error $O(\eps n)$. We want to decide whether total multiplicity covering the point $\geq O(\eps \hat{n})$, i.e.\ $\geq O(\eps n)$ after rescaling. We have multiplicative error $O(1)$, i.e. constant approximation.

During each round $\hat{n}$ increase by at most a constant factor, because in each multiplicity-doubling step $\hat{n}$ will increase by a $\frac{1}{O(k)}$ factor, and $(1+\frac{1}{O(k)})^{O(k)}=O(1)$. $m_i$ is monotone increasing. As we only want a constant approximation, we can assume $m_i'$ also only increases. In a round the total increase of $m_i'$s is $O(n)$.\\

The running time is
\[\underset{\#\text{rounds}}{O(\log\frac{n}{k})}\cdot (\underset{\text{find light ones}}{n}\cdot \underset{\text{query time}}{O(\log^2 n)}+\underset{\text{range reporting}}{k\cdot \frac{n}{k}}+\underset{\text{total increase of }m_i'}{n}\cdot \underset{\text{insertion time}}{O(\log^2 n)})\]
\[=O(n \log^3 n).\]
We need to perform $O(\log\log n)$ binary searches (shave this?) to guess $k=O(OPT)$, on $O(\log n)$ candidates $k=2^i$, $i=0,\dots,\log n$. In total the running time is
\[\underset{\text{preprocessing, }\eps\text{-net}}{O(n\log n)}+\underset{\#\text{binary searches}}{O(\log\log n)}\cdot O(n\log^3 n)=O(n\log^3 n\log\log n).\]


note. Our algorithm simplifies things. e.g. \cite{bus2018practical} complains about the original data structure problem is too complicated.

}

\subsection{Randomized Version 1}\label{sec:rand}

We now describe a better solution to Problem~\COUNT, by using randomization and the fact that all query points (namely, $X$) are given in advance.

\subparagraph*{Reducing the number of insertions in Problem \COUNT.}
In solving Problem~\COUNT, one simple way to speed up insertions is to work with a random sample $R$ of $\hat{S}$. When we insert an object to $\hat{S}$, we independently decide to insert it to the sample $R$ with probability $\rho := \frac{c_0\log n}{k}$, or ignore it with probability $1-\rho$, for a sufficiently large constant $c_0$.  (Different copies of an object are treated as different objects here.)
It suffices to solve the problem for the sample $R$ with the new threshold around $\rho k$.

To justify correctness, consider a fixed query point $p\in X$.  Let $x_1,x_2,\ldots$ be the sequence of objects in $\hat{S}$ that contain $p$, in the order in which they are inserted (extend the sequence arbitrarily to make its length greater than $k$).  Let $y_i=1$ if object $x_i$ is chosen to be in the sample $R$, or 0 otherwise.
Note that the $x_i$'s may not be independent (since the object we insert could depend on random choices made before); however, the $y_i$'s are independent.  By the Chernoff bound,
$\sum_{i=1}^{k/c} y_i \le \frac{(1+\delta)\rho k}{c}$ and $\sum_{i=1}^{k} y_i \ge (1-\delta)\rho k$ with probability $1-e^{-\Omega(\rho k)}=1-n^{-\Omega(c_0)}$ for any fixed constant $\delta>0$.
Thus, with high probability, at any time, if the number of objects in $R$ containing $p$ is more than $\frac{(1+\delta)\rho k}{c}$, then the number of objects in $\hat{S}$ containing $p$ is more than $\frac{k}{c}$; if the former number is less than $(1-\delta)\rho k$, then the later number is less than $k$.  Since there are $O(n)$ possible query points, all queries are correct with high probability.

By this strategy, the number of insertions is reduced to
$O(\rho n) = O(\frac{n}{k}\log n)=O(t\log n)$ with high probability.

\subparagraph*{Preprocessing step.}
Next we use a known preprocessing step to
ensure that each object contains at most $\frac nt$ points, in the case of 3D halfspaces.
This subproblem was addressed in Agarwal and Pan's paper~\cite{agarwal2014near} (where it was called ``(P5)''---curiously, they used it to implement their second algorithm but not their first MWU-based algorithm.)  We state a better running time:




\begin{lemma}\label{lemma:preprocessing}
In $O(n\log t)$ time, we can find a subset $T_0\subseteq S$ of $O(t)$ halfspaces, such that after removing all points in $X$ covered by $T_0$, each halfspace of $S$ contains at most
$\frac{n}{t}$ points.
\end{lemma}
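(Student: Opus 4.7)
My plan is to prove the lemma by a direct greedy ``peeling'' strategy together with a fast implementation using the shallow-cutting machinery already described in Section~\ref{sec:det}.

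First, I would establish the size bound $|T_0|=O(t)$ by the following invariant-style argument. Start with $T_0=\emptyset$ and repeatedly:\ while there exists some $h\in S$ whose current covered set $h\cap X$ has size greater than $n/t$, add $h$ to $T_0$ and delete $h\cap X$ from $X$. Since every addition removes strictly more than $n/t$ previously present points, and we started with at most $n$ points, this loop terminates after at most $t$ iterations; when it does, every halfspace in $S$ meets the shrunken $X$ in at most $n/t$ points. Correctness is therefore immediate; the entire difficulty is turning this loop into an $O(n\log t)$-time algorithm.

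For the fast implementation, the plan is to dualize and use shallow cuttings of the plane set $X^*$, exactly as in Section~\ref{sec:det}. In the dual, the depth of $h\in S$ in $X$ equals the level of the dual point $h^*$ with respect to $X^*$. Build a $(n/t)$-shallow $O(1/t)$-cutting $\Xi$ of $X^*$ in time $O(n\log t)$ by the Chan--Tsakalidis algorithm; it has $O(t)$ downward cells, and conflict lists of total size $O(n)$. Point-locate each $h^*$ in $\Xi$ in $O(\log t)$ time (via the planar subdivision of the projected cells) to identify in $O(n\log t)$ time the set $H\subseteq S$ of halfspaces whose initial depth exceeds $n/t$---namely, those whose $h^*$ lies above every cell. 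All other halfspaces already satisfy the required bound and never need to be considered, since point deletions can only decrease depth.

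Next I would run the greedy loop only over candidates in $H$. For a candidate $h$, use the cutting-based halfspace range reporting from Section~\ref{sec:det}: enumerate the planes of the conflict list of the cell sitting directly beneath $h^*$ that lie below $h^*$ and are not yet marked ``removed,'' stopping as soon as $n/t+1$ live points are found. This takes $O(\log t + n/t)$ time per halfspace when we skip, and $O(\log t + |h\cap X|)$ time when we add, with the latter charged to the removed points (each point paid for once, so $O(n)$ in total across added halfspaces). To keep the ``skip'' cost under control I would process candidates in order of decreasing initial depth (using the multi-level shallow cuttings built at levels $2^i n/t$ for $i=0,\ldots,\log t$, which the Chan--Tsakalidis construction delivers within the same $O(n\log t)$ budget) and bucket them so that a skipped $h$ in the $i$-th bucket costs only $O(\log t + 2^i n/t)$ while the number of possible candidates in bucket $i$ can be charged against removals made in coarser buckets.

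The main obstacle I foresee is precisely this accounting of the ``checked but skipped'' halfspaces: a priori there can be $\Omega(n)$ candidates, and doing $\Theta(n/t)$ work on each would blow up to $\Theta(n^2/t)$. The plan is to argue that, thanks to the bucketing by initial depth and the downward-cell structure of the cuttings, the total skip work telescopes: points removed by previously added halfspaces amortize the per-candidate scans, leaving an aggregate of $O(n\log t)$. Combined with the $O(n\log t)$ build and point-location costs and the $O(n)$ charged to additions, the whole preprocessing runs in $O(n\log t)$ time and produces a set $T_0$ of size at most $t$, as required.
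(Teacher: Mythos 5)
Your high-level greedy peeling gives the right size bound and correctness (each added halfspace deletes more than $n/t$ points, so at most $t$ additions), but the running-time analysis has a genuine gap precisely where you flag it, and I don't think the fix you sketch works. The ``skip'' cost is the problem: a halfspace whose \emph{initial} depth is, say, $2n/t$ but whose \emph{residual} depth (after earlier deletions) has dropped to $n/t$ must still be scanned for $\Theta(n/t)$ time before you can discard it, and there can be $\Omega(n)$ such halfspaces all sitting in the \emph{same} initial-depth bucket. Your bucketing by initial depth and the claimed ``telescoping against removals in coarser buckets'' does not address this scenario: the removals are made once, but the $\Omega(n)$ heavily-overlapping candidates in one bucket each individually re-pay $\Theta(n/t)$ to discover they have become light, giving $\Omega(n^2/t)$. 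Nothing in the downward-cell structure prevents this; you would need a mechanism that charges a skipped candidate's scan to something that shrinks, and no such object is identified. Relatedly, the multi-level cuttings you propose to build are static (on the original $X$), so they tell you initial depths, not residual depths, which is what the greedy actually needs.

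The paper avoids this difficulty entirely by not doing one-at-a-time greedy at all. In the dual, it runs $\log_b t$ rounds; in round $i$ it builds a $\frac{n}{b^i}$-shallow $\frac{1}{b^{i+1}}$-cutting with $O(b^i)$ downward cells and adds \emph{one arbitrary point of $S$ per nonempty cell} to $T_0$, then deletes all planes of $X$ below any added point. The key geometric observation is that for any $p\in S$ in cell $\Delta$ with chosen representative $q\in\Delta$, every surviving plane below $p$ lies above $q$ and hence must cross the convex downward cell $\Delta$, so $p$'s residual level is at most $n/b^{i+1}$. No depths are queried and no per-candidate scans occur; the round's cost is dominated by rebuilding the cutting on the reduced plane set, which Chan's incremental shallow-cutting machinery does in $O(n+b^i\log b^i)$ time given the previous round's cutting (the paper cites the recent result handling exactly this ``recompute after deletions'' scenario). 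Summing over rounds gives $O(n\log t)$, and $|T_0|=O\bigl(\sum_{i\le\log_b t} b^i\bigr)=O(t)$. So the paper substitutes a batched geometric covering argument for your candidate-by-candidate verification, which is what makes the $O(n\log t)$ bound achievable; your plan as written does not get there.
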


\begin{proof}
We may assume that all halfspaces are upper halfspaces.  We work in dual space, where $S$ is now a set of points and $X$ is a set of planes.  The goal is to find a subset $T_0\subseteq S$ of $O(t)$ points such that after removing all planes of $X$ that are below some points of $T_0$,
each point of $S$ has depth/level at most $\frac nt$.

We proceed in rounds.  Let $b$ be a constant.
In the $i$-th round, assume that all points of $S$ have level $\leq \frac{n}{b^i}$. Compute a $\frac{n}{b^i}$-shallow $\frac{1}{b^{i+1}}$-cutting with $O(b^i)$ cells.
In each cell, add an arbitrary point of $S$ (if exists) to the set $T_0$.
In total $O(b^{i})$ points are added. Remove all planes that are below these added points from $X$.

Consider a point $p$ of $S$. Let $\Delta$ be the cell containing $p$, and let $q$ be the point in $\Delta$ that was added to $T_0$.  Any plane that is below $p$ but not removed (and thus above $q$) must intersect $\Delta$, so there can be at most $\frac{n}{b^{i+1}}$ such planes.  Thus, after the round, the level of $p$ is at most $\frac{n}{b^{i+1}}$. We terminate when $b^i$ reaches $t$. The total size of $T_0$ is $O(\sum_{i=1}^{\log_b t}b^i)=O(t)$.

Naively computing each shallow cutting from scratch by Chan and Tsakalidis's algorithm would require $O(n\log n\cdot\log n)=O(n\log^2 n)$  total time.  But Chan and Tsakalidis's approach can compute multiple shallow cuttings more quickly: given a $\frac{n}{b^i}$-shallow cutting along with its conflict lists,  we can compute the next $\frac{n}{b^{i+1}}$-shallow cutting along with its conflict lists in $O(n + b^i\log b^i)$ time.
However, in our application, before computing the next cutting,
we also remove some of the input planes.  Fortunately, this type of scenario has been examined in a recent paper by Chan~\cite{chan2019dynamic}, who shows that the approach still works, provided that the next cutting is relaxed to cover only points covered by the previous cutting (see Lemma~8 in his paper); this is sufficient in our application.
In our application, we also need to locate the cell containing each point of $S$.  This can still be done in $O(n)$ time given the locations in the previous cutting.  Thus, the total time is
$O(\sum_{i=1}^{\log_b t} (n + b^i\log b^i))=O(n\log t)$.
\end{proof}

At the end, we add $T_0$ back to the solution, which still has $O(t)$ total size.

\subparagraph*{Solving Problem \COUNT.}
We now propose a very simple approach to solve Problem \COUNT: just explicitly maintain the depth of all points in $X$.
Each query then trivially takes $O(1)$ time.
When inserting an object, we find all points contained in the object and increment their depths.

Due to the above preprocessing step, the number of points contained in the object is $O(\frac nt)$.
For the case of 3D halfspaces, we can find these points by halfspace range reporting; as explained before for Problem \REPORT, this can be done in $O(\frac nt)$ time by using shallow cuttings, after an initial preprocessing in $O(n\log n)$ time.
Thus, each insertion takes $O(\frac nt)$ time.
Since the number of insertions has been reduced to $O(t\log n)$ by sampling, the total time for Problem \COUNT\ is $\Tcount=O((t\log n) \cdot \frac nt) = O(n\log n)$.

\subparagraph*{Conclusion.}
By~(\ref{eqn:runtime}), the complete randomized algorithm has running time
$O((\Treport + \Tcount)\log\frac nt + \Tnet) = O((n + n\log n)\log\frac nt + n\log n) = O(n\log n\log\frac nt) = O(n\log^2 n)$ (even including the $O(n\log n)$-time preprocessing step).
Including the binary search for $t$, the time bound is $O(n\log^2 n\log\log n)$.

\subsection{Randomized Version 2}

Finally, we combine the ideas from both the deterministic and randomized implementations, to get our fastest randomized  algorithm for 3D halfspaces.

\subparagraph*{Solving Problem \COUNT.}
We may assume that all halfspaces are upper halfspaces.  We work in dual space, where $\hat{S}$ is now a multiset of points and $X$ is a set of planes.
In a query, we want to approximately count the number of points in $\hat{S}$ that are above a query plane in $X$.
By the sampling reduction from Section~\ref{sec:rand}, we may assume that the number of insertions to $\hat{S}$ is $O(t\log n)$.
By the preprocessing step from Section~\ref{sec:rand}, we may assume that all points in $\hat{S}$ have level at most $\frac nt$.

Compute a $\frac nt$-shallow $O(\frac 1t)$-cutting with $O(t)$ downward cells, along with its conflict lists.  For each point $p\in R$, locate the cell containing $p$.  All this can be done during a (one-time) preprocessing in $O(n\log n)$ time.

For each cell $\Delta$, we maintain $\hat{S}\cap\Delta$ in a semi-dynamic data structure for 3D approximate halfspace range counting.  As described in Section~\ref{sec:det},
we get $O(\log^2 n_\Delta)$ query and insertion time, where
$n_\Delta = |\hat{S}\cap\Delta|$.

In an insertion of a point $p$ to $\hat{S}$, we look up the cell $\Delta$ containing $p$ and insert the point to the approximate counting structure in $\Delta$.

In a query for a plane $h\in X$, we look up the cells $\Delta$
whose conflict lists contain $h$, answer approximate counting queries in these cells, and sum the answers.

We bound the total time for all insertions and queries.  For each cell $\Delta$,
the number of insertions in its approximate counting structure is $n_\Delta$ and
the number of queries is $O(\frac nt)$ (since each plane $h\in X$ is queried once).
The total time is
\[ O\left(\sum_\Delta ( \tfrac nt + n_\Delta) \log^2 n_\Delta\right). \]

\noindent Since there are $O(t)$ terms and $\sum_\Delta n_\Delta = O(t\log n)$, we have $n_\Delta=O(\log n)$ ``on average''; applying Jensen's inequality to the first term, we can
bound the sum by $O(n\log^2\log n + t\log^3 n)$.
Thus, $\Tcount = O(n\log^2\log n + t\log^3 n)$.

\subparagraph*{Conclusion.}
By~(\ref{eqn:runtime}), the complete randomized algorithm has running time
$O((\Treport + \Tcount)\log\frac nt + \Tnet) = O((n + n\log^2\log n + t\log^3 n)\log\frac nt + n\log n) = O(n\log n\log^2\log n + t\log^4 n)$.
If $t\le n/\log^3 n$, the first term dominates.
On the other hand, if $t > n/\log^3 n$, our earlier randomized algorithm has running time $O(n\log n\log\frac{n}{t})=O(n\log n\log\log n)$.
In any case, the time bound is at most $O(n\log n\log^2\log n)$.
Including the binary search for $t$, the time bound is $O(n\log n\log^3\log n)$.

\begin{theorem}
Given $O(n)$ points and $O(n)$ halfspaces in $\mathbb{R}^3$, we can find a subset of halfspaces covering all points, of size within $O(1)$ factor of the minimum,  in
$O(n\log n\log^3\log n)$ time by a randomized Monte-Carlo algorithm with error probability $O(n^{-c_0})$ for any constant~$c_0$.
\end{theorem}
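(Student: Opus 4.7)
The plan is to combine the MWU framework from Section~3 with the sampling reduction and the preprocessing lemma from Randomized Version~1, and then refine the implementation of Problem \COUNT\ by localizing it to cells of a shallow cutting in dual space. From Section~4.1 I inherit $\Treport = O(n)$ for Problem \REPORT\ (using a $k$-shallow $O(k/n)$-cutting with the point locations for $X$ done once at preprocessing) and $\Tnet = O(n\log n)$ for the final $\eps$-net; both carry over unchanged. For Problem \COUNT\ I would first apply the sampling reduction from Section~\ref{sec:rand}, keeping each inserted copy with probability $\rho = \Theta((\log n)/k)$, to cut the number of insertions down to $O(t\log n)$ w.h.p., and the preprocessing step (Lemma~\ref{lemma:preprocessing}) to ensure every halfspace contains at most $n/t$ points, so that in dual space every point in $\hat{S}$ has level at most $n/t$.

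Next I would dualize and build one $(n/t)$-shallow $O(1/t)$-cutting with $O(t)$ downward cells and its conflict lists in $O(n\log n)$ time; each query plane $h\in X$ is located once against this cutting during preprocessing. For each cell $\Delta$ I maintain the subset $\hat{S}\cap\Delta$ inside an insertion-only approximate counting structure obtained by applying the logarithmic method to the static shallow-cutting-based counter from Section~\ref{sec:det}, so each operation costs $O(\log^2 n_\Delta)$ where $n_\Delta = |\hat{S}\cap\Delta|$. An insertion of a (sampled) point is routed to the unique cell containing it; a query on a plane $h$ sums the approximate counts returned by the cells in the conflict list of $h$, which has size $O(1)$.

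The crux is the aggregate cost bound. Each of the $O(t)$ cells receives $O(n/t)$ queries (since every $h\in X$ hits $O(1)$ conflict lists) and $n_\Delta$ insertions, so the total work is $O\!\left(\sum_\Delta (n/t + n_\Delta)\log^2 n_\Delta\right)$. Since $\sum_\Delta n_\Delta = O(t\log n)$, the second piece is at most $\log^2(t\log n)\cdot t\log n = O(t\log^3 n)$. For the first piece, the step I expect to be the main obstacle is the Jensen bound: using concavity of $x\mapsto \log^2 x$ (in the relevant range), $\tfrac{1}{t}\sum_\Delta \log^2 n_\Delta \le \log^2\!\bigl(\tfrac{1}{t}\sum_\Delta n_\Delta\bigr) = O(\log^2\log n)$, giving $O(n\log^2\log n)$. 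So $\Tcount = O(n\log^2\log n + t\log^3 n)$.

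Plugging into the bound~(\ref{eqn:runtime}) yields $O((n\log^2\log n + t\log^3 n)\log(n/t) + n\log n)$. I would then split on $t$: when $t \le n/\log^3 n$ this simplifies to $O(n\log n\log^2\log n)$, while when $t > n/\log^3 n$ the running time of Randomized Version~1 is already $O(n\log n\log(n/t)) = O(n\log n\log\log n)$. The outer binary search over the $O(\log n)$ powers of two for $t$ contributes the final $O(\log\log n)$ factor, giving the claimed $O(n\log n\log^3\log n)$ bound; the error probability $O(n^{-c_0})$ is inherited verbatim from the Chernoff bound in the sampling reduction and a union bound over the $O(n)$ query points.
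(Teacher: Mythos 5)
Your proposal follows the paper's own proof (Randomized Version 2) essentially step for step: the same sampling reduction and preprocessing lemma from Randomized Version 1, the same $(n/t)$-shallow cutting in dual space with a localized semi-dynamic approximate-counting structure per cell, the same Jensen bound yielding $\Tcount = O(n\log^2\log n + t\log^3 n)$, and the same case split on $t$ together with the outer binary search. One small inaccuracy in your justification: you claim that each query plane $h\in X$ lies in only $O(1)$ conflict lists, which is not true in general for shallow cuttings (a single plane can intersect many cells of the cutting). The correct reason that each cell receives $O(n/t)$ queries, which is what the paper uses, is simply that each cell's conflict list has size $O(n/t)$, so at most that many distinct planes of $X$ can ever query it; your derived sum $\sum_\Delta(n/t + n_\Delta)\log^2 n_\Delta$ is therefore still correct, just for a different reason than you gave.
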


\subparagraph*{Remark.} The number of the $\log\log n$ factors is improvable with still more effort, but we feel it is of minor significance.

\ignore{
**********************



\begin{problem}[P2]
Design a dynamic data structure that supports the following operations, where we assume the set of query points is known in advance, furthermore the order of querying is also fixed in advance:
\begin{itemize}
\item Insert an (unweighted) halfspace.
\item Given a query point, decide whether the point is $\eps$-light or $\eps$-heavy.
\end{itemize}
\end{problem}

\begin{lemma}
Let $t$ be a parameter to be set later. There exists a randomized algorithm for (P2) that requires $O(\frac{k}{n}\log n\cdot \frac{n}{kt})=O(\frac{\log n}{t})$ insertion time and $O(1)$ query time, which succeeds w.h.p.
\end{lemma}

\paragraph{Algorithm 1.} We maintain a sample $\tilde{R}$ of halfspaces to estimate the depth of a point. For each halfspace in the multiset $\hat{S}$ (with multiplicity), we include it in the sample with probability $p=\frac{k}{\hat{n}}\log n$. $\tilde{R}$ will contain $O(\hat{n}\cdot p)=O(k\log n)$ halfspaces with high probability.

We can prove this by Chernoff bound:
\[\mathrm{Pr}[\sum_{i=1}^{\hat{n}} X_i\geq (1+\delta)\mu]\leq 2^{-(1+\delta)\mu},\]
where $X_i=1$ with probability $p$, otherwise $X_i=0$. $\mu=\hat{n}p=k\log n$, $\delta=\Theta(1)$.


fix: We do not need to repeatedly query the $\eps$-lightness of the same point using $\tilde{R}'$, because after performing a multiplicity-doubling step, its real depth (with respect to $\hat{m_i}$) can easily be maintained. If we need to repeatedly query, we can use the real depth instead of the depth in $\tilde{R}'$, and therefore won't make error. Therefore we can assume the sequence of querying points using $\tilde{R}$ is fixed in advance, which is $1,\dots, n$ in a round.

Consider a query point $q$ (fixed in the sequence). Each halfspace (with multiplicity) that covers $q$ corresponds to a random variable chosen i.i.d. with probability $p$. However, whether $x_i$ exists depends on $x_1,\dots,x_{i-1}$. Fix the value of $x_1,\dots,x_{i-1}$, the probability of $x_i$ exists is a fixed value (taking expectation on other implicit random variables).\\
Now pretend $x_1,\dots,x_{\eps\hat{n}}$ exists (notice that $\hat{n}$ is scaled to $O(n)$). They are i.i.d. random variables, so we can apply Chernoff bound.

update.\\
Set $\ell_0=\eps\hat{n}$. Suppose the random process terminates at $\ell$ (where $\ell$ is a random variable), only $x_1,\dots,x_\ell$ really exists. For the error probability:
\[\mathrm{Pr}[\sum_{i=1}^\ell x_i>(1+\eps)p\ell_0\text{ and }\ell\leq \ell_0]\leq \mathrm{Pr}[\sum_{i=1}^{\ell_0} x_i>(1+\eps)p\ell_0],\]
and similarly we have
\[\mathrm{Pr}[\sum_{i=1}^\ell x_i<(1-\eps)p\ell_0\text{ and }\ell\geq \ell_0]\leq \mathrm{Pr}[\sum_{i=1}^{\ell_0} x_i<(1-\eps)p\ell_0],\]
so we can apply Chernoff bound to get $\mathrm{Pr}[|\sum_{i=1}^{\ell_0} x_i-p\ell_0|>\eps p\ell_0]\leq \frac{1}{2^{O(b)}}$, for $\eps=O(1)$.

To analyse the correctness of this algorithm, consider two types of errors:\\
1. heavy point detected as light. This only happens when $x_1,\dots,x_{\eps\hat{n}}$ exists, and Chernoff bound fails on them (the sum is too small), so with probability $\frac{1}{2^{O(b)}}$ if $p=\frac{k}{\hat{n}}b$.\\
2. light point detected as heavy. The analysis is similar.\\
Now we can use linearity of expectation to bound the expected number of errors happening.

In each multiplicity-doubling step, we will increase the multiplicity of all halfspaces covering a point, and the sample $\tilde{R}$ will change by $\eps\hat{n}\cdot p=\frac{\hat{n}}{k}\cdot \frac{k}{\hat{n}}\log n=O(\log n)$ elements in expectation. Because $\hat{n}$ will only increase by a constant factor in each round, we can wlog only insert elements to $\tilde{R}$, which can be easily done in time proportional to the number of changes. In a round, the total number of insertions to $\tilde{R}$ is $\hat{n}\cdot \frac{k}{\hat{n}}\log n=O(k\log n)$ w.h.p.

Instead of the $2$-approximate range counting structure, for each point $q_i$ we maintain a counter $c_i$ which indicates the depth of $q_i$ in $\tilde{R}$. Whenever we insert a halfspace to $\tilde{R}$, use bruteforce to enumerate all points covered by it and change their counters. From (P5) we know there are $O(\frac{n}{kt})$ such points, so this can be done by range reporting in $O(\frac{n}{kt})$ time. Query whether a point is light or heavy can be done in $O(1)$ using the counters.

The running time is\\
$$\underset{\#\text{rounds}}{O(\log\frac{n}{k})}\cdot (\underset{\text{find light ones}}{n}\cdot \underset{\text{counter query}}{O(1)}+\underset{\text{range reporting}}{k\cdot \frac{n}{k}}+\underset{\#\text{change of samples}}{k\log n}\cdot \underset{\#\text{points covered by an object}}{\frac{n}{kt}})$$
$$=O(\log\frac{n}{k}\cdot \frac{n}{t}\log n).$$
When $k\geq \frac{n}{\log^c n}$ for any constant $c$, the running time is $O(\frac{n}{t}\log n\log\log n)$.


\begin{lemma}
There exists a randomized algorithm for (P2) that requires $O(\frac{k}{n}\log\log n\cdot \frac{n}{kt}+\frac{k}{n}\log n\cdot \log^2 n)=O(\frac{\log\log n}{t}+\frac{k\log^3 n}{n})$ insertion time. And for query time:\\
1. if the query point is actually light, it detects it as light in time $O(1+\log^2 n)$ and with probability $1-\frac{1}{\log^{O(1)} n}$.\\
2. if the query point is actually heavy, it detects it as heavy in time $O(1+\frac{\log^2 n}{\log^{O(1)}n})=O(1)$ w.h.p.
\end{lemma}

\paragraph{Algorithm 2.} We improve on Algorithm 1 by maintaining another sample $\tilde{R}'$ (double sampling), where we sample with probability $p'=\frac{k}{\hat{n}}b$. Set $b=\log\log n$ (instead of $O(\log n)$). Since we are using a smaller $b$, we no longer have a constant approximation of the depth w.h.p., and the error probability is $e^{-\Theta(\mu)}=\frac{1}{2^{O(b)}}$ by Chernoff bound. We have two types of errors:\\
1. If a light point is detected as heavy, we will miss it in this round. However, if there exist $k$ light points, in expectation we can successfully detect $k\cdot (1-\frac{1}{2^{O(b)}})=\Theta(k)$ light points in a round. So if we increase the number of rounds by a constant factor, the algorithm will still terminate w.h.p.\ by Chernoff bound.\\
2. If a heavy point is detected as light, we want to prevent this. For each point detected as light, check its lightness by a $2$-approximate range counting structure build on the sample $\tilde{R}$ (which is w.h.p.\ correct). Each query takes $O(\log^2 n)$ time. In each round, we need to maintain a $2$-approximate range counting structure on $\tilde{R}$ of size $O(k\log n)$. This needs $O(k\log n)\cdot O(\log^2 n)$ time.

The running time is
$$\underset{\#\text{rounds}}{O(\log\frac{n}{k})}\cdot (\underset{\text{find light ones}}{n}\cdot \underset{\text{counter query}}{O(1)}+\underset{\text{range reporting}}{k\cdot \frac{n}{k}}+\underset{\#\text{change of samples}}{kb}\cdot \underset{\#\text{points covered by an object}}{\frac{n}{kt}}$$
$$+(\underset{\text{real light points}}{k}+n\cdot \underset{\text{probability of failure}}{\frac{1}{2^{O(b)}}})\cdot \underset{\text{query time}}{\log^2 n}+\underset{\#\text{change of }\tilde{R}}{k\log n}\cdot \underset{\text{insertion time}}{\log^2 n})$$
$$=O(\log\frac{n}{k})\cdot (n+\frac{n}{t}\log \log n+(k+\frac{n}{poly~\log n})\cdot \log^2 n+(k\log n)\cdot \log^2 n)$$
$$=O(\log\frac{n}{k})\cdot (n+\frac{n}{t}\log \log n+k\log^3 n).$$

When $k\leq \frac{n}{\log^3 n}$, the running time is $O(\frac{n}{t}\log n\log\log n)$, assuming $t\leq \log\log n$.

We use the parameter $t$ to ensure Lemma \ref{lemma:preprocessing} for (P5), by including $O(tk)$ halfspaces in the preprocessing step, so we will get a $t$-approximation.

Finally we combine the two algorithms. When $k\geq \frac{n}{\log^c n}$ we use Algorithm 1, otherwise use Algorithm 2. The running time is $O(\frac{n}{t}\log n\log\log n)$.

\paragraph{Binary search.} Now analyse the total running time including binary search for OPT. Naively there's an algorithm with $O(\log\log n)$ time slowdown, by performing $O(\log\log n)$ binary searches on powers of $2$ in $[1,n]$. But here's a better strategy:\\
Suppose we already have an $r$-approximation. For any parameter $t$, we can get a $t$-approximation in $O(\frac{n}{t}\log n\log\log n\log\log r)$ time, by performing $O(\log\log r)$ binary searches on $O(\log r)$ possible values, and each run takes $O(\frac{n}{t}\log n\log\log n)$ time, by using Algorithm 1 when $k\geq \frac{n}{\log^{O(1)} n}$, and using Algorithm 2 when $k\leq \frac{n}{\log^3 n}$.

First compute a $t_0=\log\log n$ approximation in $O(\frac{n}{t_0}\log n\log\log n)=O(n\log n)$ time. Let $t_{i+1}=\log t_i$, we can get a $t_{i+1}$-approximation from the current $t_{i}$-approximation in $O(n\log n\log\log n\cdot \frac{\log\log t_i}{t_{i+1}})$ time. The algorithm will terminate after $O(\log^* n)$ rounds of the above refinements, and we have $$\sum_{i=0}^{\log^*n}\frac{\log\log t_i}{t_{i+1}}=\sum_{i=1}^{\log^*n}\frac{\log t_{i+1}}{t_{i+1}}=O(1),$$
because this series converges to $0$ faster than geometric series: $\frac{\log t_{i}}{t_i}\leq \frac{1}{2}\cdot \frac{\log \log t_i}{\log t_i}=\frac{1}{2}\cdot \frac{\log t_{i+1}}{t_{i+1}}$, and therefore the sum is dominated by the first term. Therefore the total running time is $O(n\log n\log\log n\cdot \sum_{i=0}^{\log^* n}\frac{\log\log t_i}{t_{i+1}})=O(n\log n\log\log n)$.

$O(n\log n\log\log n\cdot \frac{\log\log\log \log n}{\log\log\log n})$??


}

\section{Weighted Set Cover}\label{sec:weighted}

In this final section, we consider the weighted set cover problem.
We define $\eps$-lightness and $\eps$-nets as before, ignoring the weights.  It is known that there exists an $\eps$-net of $S$ with total weight
$O(\frac{1}{\eps}\cdot \frac{w(S)}{|S|})$, for any set of 3D halfspaces or 2D disks (or objects in 2D with linear union complexity)~\cite{chan2012weighted}.  Here, the weight $w(S)$ of a set $S$ refers to the sum of the weights of the objects in $S$.


\subsection{MWU Algorithm in the Weighted Case}



\newcommand{\OPTT}{t}

Let $X$ be the set of input points and $S$ be the set of weighted input objects, where object $i$ has weight $w_i$, with $n=|X|+|S|$.
Let $\OPT$ be the weight of the minimum-weight set cover.
We assume that a value $\OPTT\in [\OPT,2\,\OPT]$ is given; this assumption can be removed by a binary search for $\OPTT$.

We may delete objects with weights $>\OPTT$.
We may automatically include all objects with weights $<\frac{1}{n}\OPTT$ in the solution, and delete them and all points covered by them, since the total weight of the solution increases by only $O(n\cdot \frac{1}{n}\OPTT)=O(\OPTT)$.
Thus, all remaining objects have weights in $[\frac{1}{n}\OPTT,\,\OPTT]$.
By rescaling, we may now assume that all objects have weights in $[1,n]$ and that $\OPTT=\Theta(n)$.

In the following,
for a multiset $\hat{S}$ where object $i$ has multiplicity $m_i$, the weight of the multiset is defined as $w(\hat{S})=\sum_i m_iw_i$.

We describe a simple variant of the basic MWU algorithm to solve the weighted set cover problem.  (A more general, randomized MWU algorithm for geometric set cover was given recently by Chekuri, Har-Peled, and Quanrud~\cite{fasterlp}, but our algorithm is simpler to describe and analyze.)
The key innovation is to replace doubling with multiplication by a factor $1+\frac{1}{w_i}$, where $w_i$ is the weight of the concerned object $i$.  (Note that multiplicities may now be non-integers.)

\begin{algorithm}[H]
\begin{algorithmic}[1]
\State Guess a value $\OPTT\in [\OPT, 2\,\OPT]$.
\State Define a multiset $\hat{S}$ where each object $i$ in $S$ initially has multiplicity $m_i=1$.
\Repeat  
    \State Find a point $p$ which is $\eps$-light in $\hat{S}$
    with $\eps=\frac{1}{2\OPTT}\cdot \frac{w(\hat{S})}{|\hat{S}|}$.
    \For {each object $i$ containing $p$}
    \Comment{call lines 5--6 a ``multiplicity-increasing step''}
        \State Multiply its multiplicity $m_i$ by $1+\frac{1}{w_i}$.
    \EndFor
\Until {all points are $\eps$-heavy in $\hat{S}$.}
\State Return an $\eps$-net of the multiset $\hat{S}$.
\end{algorithmic}
\end{algorithm}
Since at the end all points are $\eps$-heavy in $\hat{S}$, the returned subset is a valid set cover of $X$.  For halfspaces in 3D or disks in 2D, its weight is $O(\frac{1}{\eps}\cdot \frac{w(\hat{S})}{|\hat{S}|})=O(\OPT)$.

We now prove that the algorithm terminates in $O(\OPTT\log n)=O(n\log n)$ multiplicity-increasing steps.

%
In each multiplicity-increasing step, $w(\hat{S})$ increases by
\[\sum_{\text{object }i\text{ containing }p}m_i\cdot \tfrac{1}{w_i}\cdot w_i\ =\sum_{\text{object }i\text{ containing }p}m_i\ \leq\ \tfrac{w(\hat{S})}{2\OPTT},\]

\noindent
i.e., $w(\hat{S})$ increases by a factor of at most
$1+\frac{1}{2\OPTT}$. Initially, $w(\hat{S})\le n^2$.
Thus, after $z$ multiplicity-increasing steps, $w(\hat{S})\leq n^2(1+\frac{1}{2\OPTT})^z \le n^2 e^{z/(2\OPTT)}$.

On the other hand, consider the optimal set cover $T^*$.
Suppose that object $i$ has its multiplicity increased $z_i$ times. In each multiplicity-increasing step, at least one object in $T^*$ has its multiplicity increased. So, after $z$ multiplicity-increasing steps,
$\sum_{i\in T^*}z_i\geq z$ and $\sum_{i\in T^*}w_i\le\OPTT$.
In particular, $z_i/w_i\ge z/\OPTT$
for some $i\in T^*$.  Therefore,
$w(\hat{S})\geq (1+\frac{1}{w_i})^{z_i}w_i\geq (1+\frac{1}{w_i})^{z_i}\ge 2^{z_i/w_i}\ge 2^{z/\OPTT}$ (since $w_i\ge 1$).
We conclude that $2^{z/\OPTT}\le w(\hat{S})\le n^2 e^{z/(2\OPTT)}$,
implying that $z= O(\OPTT \log n)$.

Similar to Agarwal and Pan's first MWU algorithm, we can also divide the multiplicity-increasing steps into rounds, with each round performing up to $\OPTT$ multiplicity-increasing steps. Within each round, the total weight $w(\hat{S})$ increases by at most $(1+\frac{1}{2\OPTT})^{\OPTT}=O(1)$. Also if $|\hat{S}|$ increases by a constant factor, we immediately start a new round: because $|\hat{S}|\leq w(\hat{S})$ and $w(\hat{S})$ may be doubled at most $O(\log n)$ times, this case can happen at most $O(\log n)$ times. This ensures that if a point is checked to be $\eps$-heavy at any moment during a round, it will remain $\Omega(\eps)$-heavy at the end of the round. There are only $O(\log n)$ rounds.

Additional ideas are needed to speed up implementation (in particular, our modified MWU algorithm with multiplicity-readjustment steps does not work as well now).  First, we work with an approximation $\tilde{m}_i$ to the multiplicity $m_i$ of each object $i$.  By rounding, we may assume all weights $w_i$ are powers of 2.  In the original algorithm, $m_i=(1+\frac{1}{w_i})^{z_i}$, where $z_i$ is the number of points $p\in Z$ that are contained in object $i$, and $Z$ be the multiset consisting of all points $p$ that have undergone multiplicity-increasing steps so far.  Note that since the total multiplicity is $n^{O(1)}$, we have $z_i=O(w_i\log n)$.  Let $Y^{(w_i)}$ be a random sample of $Z$ where each point $p\in Z$ is included independently with probability $\frac{\log^2 n}{w_i}$ (if $w_i=O(\log^2 n)$, we can just set $Y^{(w_i)}=Z$).  Let $y_i$ be the number of points $p\in Y^{(w_i)}$ that are contained in object~$i$.  By the Chernoff bound, since $\frac{\log^2 n}{w_i} z_i=O(\log^3n)$, we have $|y_i - \frac{\log^2 n}{w_i} z_i| \le O(\log^2 n)$ with high probability.  By letting $\tilde{m}_i=(1+\frac{1}{w_i})^{y_i w_i/\log^2 n}$, it follows that $\tilde{m}_i$ and $m_i$ are within a factor of $O(1)$ of each other, with high probability, at all times, for all $i$.  Thus, our earlier analysis still holds when working with $\tilde{m}_i$ instead of $m_i$.  Since $z_i=O(w_i\log n)$, we have $y_i=O(\log^3 n)$ with high probability.  So, the total number of increments to all $y_i$ and updates to all $\tilde{m}_i$ is $O(n\log^3 n)$.  In lines 5--6, we flip a biased coin to decide whether $p$ should be placed in the sample $Y^{(2^j)}$ (with probability $\frac{\log^2 n}{2^j}$) for each $j$, and if so, we use halfspace range reporting in the dual to find all objects $i$ of weight $2^j$ containing $p$, and increment $y_i$ and update $\tilde{m}_i$.  Over all $O(n\log n)$ executions of lines 5--6 and all $O(\log n)$ indices $j$, the cost of these halfspace range reporting queries is $O(n\log n\cdot \log n\cdot \log n)$ plus the output size.  As the total output size for the queries is $O(n\log^3 n)$, the total cost is $O(n\log^3n)$.

We also need to redesign a data structure for lightness testing subject to multiplicity updates: For each $j$, we maintain a subset $S^{(j)}$ containing all objects $i$ with multiplicity at least $2^j$, in a data structure to support approximate depth (without multiplicity).  The depth of a point $p$ in $\hat{S}$ can be $O(1)$-approximated by $\sum_j 2^j\cdot \text{(depth of $p$ in $S^{(j)}$)}$.  Each subset $S^{(j)}$ undergoes insertion only, and the logarithmic method can be applied to each $S^{(j)}$.  Since $|\hat{S}|\le w(\hat{S})\le n^{O(1)}$, there are $O(\log n)$ values of~$j$.  This slows down lightness testing by a logarithmic factor, and so in the case of 3D halfspaces, the overall time bound is $O(n\log^4n\log\log n)$, excluding the $\eps$-net construction time.

\SOCG{
We can efficiently construct an $\eps$-net of the desired weight for 3D halfspaces in $O(n\log n)$ randomized time, by using the quasi-random sampling technique of Varadarajan~\cite{varadarajan2010weighted} and Chan et al.~\cite{chan2012weighted} in a more careful way.
Due to lack of space, we defer the description to the full paper.
We conclude:
}




\PAPER{

\subsection{Speeding up Quasi-Uniform Sampling}\label{app}

Finally, we show how to efficiently construct an $\eps$-net of the desired weight for 3D halfspaces.
We will take advantage of the fact that we need $\eps$-nets only in the discrete setting, with respect to a given set $X$ of $O(n)$ points.
Without loss of generality, assume that all halfspaces are upper halfspaces.

We begin by sketching (one interpretation of) the quasi-random sampling algorithm of Varadarajan~\cite{varadarajan2010weighted} and Chan et al.~\cite{chan2012weighted}:

\begin{algorithm}[H]
\begin{algorithmic}[1]
\State $k=\eps |S|$
\Repeat
\State Remove all points $p\in X$ with depth in $S$ less than $k$.
\State Move each point $p\in X$ downward so that its depth in $S$ is $\Theta(k)$.
\State  Pick a random sample $R\subseteq S$ of size $|S|/2+h$ for some appropriate choice of $h$.
\State Let $S'=S$. 
\Repeat
\State Find an object $i\in S'$ containing the fewest number of \emph{non-equivalent} points in $X$.
\If {object $i$ contains a \emph{bad} point}
add object $i$ to the output.
\EndIf
\State Remove object $i$ from $S'$. 
\Until {$S'$ is empty.}
\State Set $S\leftarrow R$ and $k\leftarrow k/2$.
\Until {$k$ is below a constant.}
\State Add $S$ to the output.
\end{algorithmic}
\end{algorithm}
In line~4, we use the property that the objects are upper halfspaces.
In line~8, two points $p$ and $q$ of $X$ are considered \emph{equivalent} iff the subset of objects from $S'$ containing $p$ is the same as the corresponding subset for $q$.
In line~9, a point $p$ is said to be \emph{bad} iff its depth in $S'$ is equal to $k$ and its depth in $R\cap S'$ is less than $k/2$.

With appropriate choices of parameters, in the case of 3D halfspaces, Chan et al.~\cite{chan2012weighted} showed that the output is an $\eps$-net, with the property that each object of $S$ is in the output with probability $O(\frac{1}{\eps |S|})$ (these events are not independent, so the output is only a  ``quasi-uniform'' sample).  This property immediately implies that the output has expected weight $O(\frac{1}{\eps}\cdot \frac{w(S)}{|S|})$. 
We will not redescribe the proof here, as our interest lies in the running time.

Consider one iteration of the outer repeat loop.
For the very first iteration, lines 3--4 can be done by answering $|X|$ halfspace range reporting queries in the dual (reporting up to $O(k)$ objects containing each query point $p\in X$), which takes $O(|S|\log |S| + |X|k)$ total time by using shallow cuttings (as described in Section~\ref{sec:det}).
As a result, we also obtain a list of the objects containing each point; these lists have total size $O(|X|k)$.  In each subsequent iteration, lines 3--4 take only $O(|X|k)$ time by scanning through these lists and selecting the $k$ lowest bounding planes per list.

For each point $p$, we maintain its depth in $S'$ and its depth in $R\cap S'$.  Whenever an object is removed from $S'$, we examine all points in the object, and if necessary, decrement these depth values; this takes $O(|X|k)$ total time (since there are $O(|X|k)$ object-point containment pairs).  Then line~9 can be done by scanning through all points in the object; again, this takes $O(|X|k)$ total time.

Line 8 requires more care, as we need to keep track of equivalence classes of points.  One way is to use hashing or fingerprinting~\cite{MotRag}:
for example, map each point $p$ to $(\sum_{\text{object $i\in S'$ containing $p$}} x^i)\bmod u$ for a random $x\in [u]$ and a fixed prime $u\in \Theta(n^c)$, where $c$ is a sufficiently large constant.  Then two points are equivalent iff they are hashed to the same value, with high probability.  When we remove an object $i$ from $S'$, we examine all points contained in the object, recompute the hash values of these points (which takes $O(1)$ time each, given a table containing $x^i\bmod u$), and whenever we find two equivalent points with the same hash values, we remove one of them.
This takes $O(|X|k)$ total time (since there are $O(|X|k)$ object-point containment pairs).
To implement line~8, for each object, we maintain a count of the number of points it contains.  Whenever we remove a point, we decrement the counts of objects containing it; again, this takes $O(|X|k)$ total time.  The minimum count can be maintained in $O(1)$ time per operation without a heap, since the only update operations are decrements (for example, we can place objects in buckets indexed by their counts, and move an object from one bucket to another whenever we decrement).

To summarize, the first iteration of the outer repeat loop takes $O(|S|\log |S| + |X|k)$ time, and each subsequent iteration takes $O(|X|k)$ time.
Since $k$ is halved in each iteration, the total time over all iterations is $O(|S|\log |S| + |X|k_0)$ where $k_0=\eps|S|$.

In our application, we need to compute an $\eps$-net of a \emph{multiset} $\hat{S}$.  Since the initial halfspace range reporting subproblem can be solved on the set $S$ without multiplicities,
the running time is still $O(|S|\log |S| + |X|k_0)$ but with $k_0=\eps|\hat{S}|$.

For $|S|,|X|=O(n)$, the time bound is $O(n\log n + nk_0)$, which is still too large.
To reduce the running time, we use one additional simple idea: take a random sample $R\subseteq \hat{S}$ of size $\frac{c}{\eps}\log n$ for a sufficiently large constant $c$.
Then $\mathbb{E}[w(R)] = O(\frac{w(\hat{S})}{\eps|\hat{S}|}\log n)$.
For a fixed point $p$ of depth $\ge \eps |\hat{S}|$ in $\hat{S}$, the depth of $p$ in $R$ is $\Omega(\eps|R|)=\Omega(\log n)$ with high probability, by the Chernoff bound.  We then compute a $\Theta(\eps)$-net of $R$, which gives us an $\eps$-net of $\hat{S}$, with
expected weight $O(\frac{1}{\eps} \cdot \frac{\mathbb{E}[w(R)]}{|R|}) = O(\frac{1}{\eps}\cdot \frac{w(\hat{S})}{|\hat{S}|})$.
The net for $R$ is easier to compute, since
$k_0$ is reduced to $\eps |R|=O(\log n)$.
The final running time for the $\eps$-net construction
is $O(|S|\log|S| + nk_0)=O(n\log n)$.

We can verify that the net's weight bound holds (and that all points of $X$ are covered), and if not, repeat the algorithm for $O(1)$ expected number of trials.

We conclude:
}

\begin{theorem}
Given $O(n)$ points and $O(n)$ weighted halfspaces in $\mathbb{R}^3$, we can find a subset of halfspaces covering all points, of total weight within $O(1)$ factor of the minimum, in
$O(n\log^4n\log\log n)$ expected time by a randomized Las Vegas algorithm.
\end{theorem}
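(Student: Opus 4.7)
The plan is to orchestrate the four ingredients the paper has already assembled: (i) the weighted MWU algorithm with multiplicity-increasing steps, (ii) sampling-based approximation of the multiplicities, (iii) a layered data structure for approximate lightness testing, and (iv) the accelerated quasi-uniform sampling that produces a weighted $\eps$-net in $O(n\log n)$ time. After reducing to the case $w_i\in[1,n]$ with $\OPTT=\Theta(n)$ (discard heavy halfspaces, include all very light ones in the output, and round remaining weights to powers of $2$ to get $O(\log n)$ weight classes), I would run the weighted MWU algorithm for a guessed value $\OPTT\in[\OPT,2\,\OPT]$; by the bound established in the paper it terminates in $O(\OPTT\log n)=O(n\log n)$ multiplicity-increasing steps, partitioned into $O(\log n)$ rounds.

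Next, I would avoid ever touching all the $m_i$'s explicitly. For each weight class $2^j$ maintain a sample $Y^{(2^j)}$ that includes each point hit in a multiplicity-increasing step with probability $\frac{\log^2 n}{2^j}$; define $\tilde m_i=(1+1/w_i)^{y_iw_i/\log^2 n}$, which the Chernoff bound guarantees stays within an $O(1)$ factor of $m_i$ throughout the run with high probability. To find the objects needed inside a step, use the shallow-cutting-based 3D halfspace range reporting data structure from Section~\ref{sec:det}; summed over all $O(n\log n)$ steps and all $O(\log n)$ weight classes, the total output of these reports is $O(n\log^3 n)$.

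For lightness testing, maintain for each $j$ the subset $S^{(j)}$ of objects with $\tilde m_i\ge 2^j$ under insertions only, using the insertion-only approximate halfspace range counting structure from Section~\ref{sec:det} (with $O(\log^2 n)$ update/query time via the logarithmic method). Approximating the depth of a query point $p$ in $\hat S$ by $\sum_j 2^j\cdot(\text{approx.\ depth of }p\text{ in }S^{(j)})$ gives an $O(1)$-approximation, at a cost of an extra $\log n$ factor for the weight classes. Summed over the $O(n\log n)$ multiplicity-increasing steps, lightness testing costs $O(n\log n\cdot\log n\cdot\log^2 n)=O(n\log^4 n)$, which dominates.

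At the end of the run, invoke the quasi-uniform sampling method from Section~\ref{app} on the current multiset $\hat S$: draw a uniform sample $R$ of size $O(\frac{1}{\eps}\log n)$ and compute an $\Theta(\eps)$-net of $R$ in $O(n\log n)$ time, which delivers an $\eps$-net of $\hat S$ of the desired expected weight. Verifying the weight bound and all points are covered lets us make this Las Vegas with constant expected repetitions. Finally, binary-search over $O(\log n)$ powers of $2$ for $\OPTT$, which adds a $\log\log n$ factor and gives $O(n\log^4 n\log\log n)$ overall. The main obstacle I anticipate is the interaction between the approximate multiplicities $\tilde m_i$ and the MWU potential argument: one must check that the $O(1)$-approximation from the sampled $y_i$'s is accurate enough to preserve both the termination bound and the round structure (so that heavy points remain $\Omega(\eps)$-heavy for the entire round) with high probability simultaneously at every moment, and that the two independent randomizations (multiplicity estimation and $\eps$-net sampling) compose cleanly.
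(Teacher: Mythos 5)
Your proposal follows the paper's own proof essentially step for step: the reduction to $w_i\in[1,n]$ with $O(\log n)$ weight classes, the weighted MWU with $O(n\log n)$ multiplicity-increasing steps in $O(\log n)$ rounds, the per-class samples $Y^{(2^j)}$ giving $O(1)$-approximate multiplicities $\tilde m_i$, the layered insertion-only structures $S^{(j)}$ costing an extra $\log n$ factor for lightness tests, the $O(n\log n)$-time $\eps$-net via a preliminary $O(\frac{1}{\eps}\log n)$-size sample fed into quasi-uniform sampling, and the final $\log\log n$-factor binary search. The obstacles you flag at the end (stability of the potential argument under $O(1)$-approximate multiplicities, and composing the two sources of randomness) are real loose ends the paper also leaves largely to the reader, so there is no meaningful divergence from the paper's argument.
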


\subparagraph*{Remark.}
A remaining open problem is to find efficient deterministic algorithms for the weighted problem.
Chan et al.~\cite{chan2012weighted} noted that the quasi-uniform sampling technique can be derandomized via the method of conditional probabilities, but the running time is high.


\bibliographystyle{plain}
\bibliography{references}

\end{document}